\newtheorem{proposition}{\bf Proposition}
\newlength{\aligntop}
\newlength{\alignbot}
\renewenvironment{align}{%
  \vspace{\aligntop}
  \start@align\@ne\st@rredfalse\m@ne
}{%
  \math@cr \black@\totwidth@
  \egroup
  \ifingather@
    \restorealignstate@
    \egroup
    \nonumber
    \ifnum0=`{\fi\iffalse}\fi
  \else
    $$%
  \fi
  \ignorespacesafterend%
  \vspace{\alignbot}\par\noindent
}
\begin{document}
\title{\huge Drone Small Cells in the Clouds: Design, Deployment and Performance Analysis\vspace{-0.4cm}}

\author{\authorblockN{ Mohammad Mozaffari, Walid Saad, Mehdi Bennis and Merouane Debbah} \authorblockA{\small
Wireless@VT, Electrical and Computer Engineering Department, Virginia Tech, VA, USA, Emails: \url{{mmozaff , walids}@vt.deu}\\
 }

 }
\author{\authorblockN{ Mohammad Mozaffari$^1$, Walid Saad$^1$, Mehdi Bennis$^2$, and Merouane Debbah$^3$} \authorblockA{\small
$^1$ Wireless@VT, Electrical and Computer Engineering Department, Virginia Tech, VA, USA, Emails: \url{{mmozaff , walids}@vt.deu}\\
$^2$ CWC - Centre for Wireless Communications, Oulu, Finland, Email: \url{bennis@ee.oulu.fi}\\
$^3$ Mathematical and Algorithmic Sciences Lab, Huawei France R \& D, Paris, France, Email:\url{merouane.debbah@huawei.com}
}%
\thanks{This research was supported by the U.S. National Science Foundation under Grant AST-1506297.}
}

\maketitle

\begin{abstract}
The use of drone small cells (DSCs) which are aerial wireless base stations that can be mounted on flying devices such as unmanned aerial vehicles (UAVs), is emerging as an effective technique for providing wireless services to ground users in a variety of scenarios. The efficient deployment of such DSCs while optimizing the covered area is one of the key design challenges. In this paper, considering the low altitude platform (LAP), the downlink coverage performance of DSCs is investigated. The optimal DSC altitude which leads to a maximum ground coverage and minimum required transmit power for a single DSC is derived. Furthermore, the problem of providing a maximum coverage for a certain geographical area using two DSCs is investigated in two scenarios; interference free and full interference between DSCs. The impact of the distance between DSCs on the coverage area is studied and the optimal distance between DSCs resulting in maximum coverage is derived. Numerical results verify our analytical results on the existence of optimal DSCs altitude/separation distance and provide insights on the optimal deployment of DSCs to supplement wireless network coverage.\vspace{-0.01cm}
\end{abstract}

\section{Introduction}\vspace{-0.1cm}
Recently, using aerial base stations to support ground cellular networks has received significant attention. Particularly,  drone small cells (DSCs) can act as aerial base stations to support cellular networks in high demand and overloaded situations, or for the purpose of public safety and disaster management \cite{R1}. The main advantage of using DSCs is that they do not need to have an actual pilot and hence they can be autonomously deployed in dangerous environments for the purpose of search, rescue and communication. Furthermore, since DSCs are essentially mobile base stations, they are more robust against environmental changes as compared to fixed ground base stations. For example, in disaster situations such as earthquakes or floods where some of ground base stations are damaged, or during big public events such as Olympic Games where a huge demand for communication is observed, the cellular network needs to be assisted to provide the needed capacity and coverage \cite{R1}. In these cases, deploying DSCs acting as base stations is extremely useful in providing an improved quality-of-service (QoS) for ground users. The deployment of DSCs faces many challenges such as power consumption, coverage optimization and interference management \cite{R2}.

To address some of these challenges, the authors in \cite{R2} provided a general view of practical considerations for the integration of DSCs with cellular networks. The work in \cite{Overload}, considered the use of DSCs to compensate for the cell overload and outage in cellular networks. However, in this body of work there is no extensive discussion on the coverage performance of DSCs and deployment methods. Due to the special application of DSCs in unexpected events such as disaster, rapid and efficient deployment of DSCs is mandatory to provide a maximum coverage service for ground users. DSCs can be deployed in a high altitude platform (HAP) which is above 10 km height or in low altitude platform (LAP) with the altitude of below 10 km \cite{R3}. In \cite{R4} the optimal deployment/movement of DSCs in order to improve the connectivity of  wireless ad-hoc networks was studied. In \cite{R5}, considering  static ground users, the optimum trajectory and heading of DSCs equipped with multiple antennas for ground to air uplink scenario was investigated.\

Beyond deployment, another important challenge for mobile DSC base stations is channel modeling. For instance, in \cite{R6}, the probability of line of sight (LOS) for air to ground communication as a function of elevation angle and average height of buildings in a dense urban area was determined. The air to ground path loss model has been presented in \cite{R7},\cite{R8}. As discussed in \cite{R8} , due to path loss and shadowing effects of obstacles, the characteristics of the air to ground channel depend on the height of the aerial base stations. By increasing the altitude of a DSC, the path loss increases, however, shadowing effect decreases and the possibility of having LOS connections between ground users and DSCs increases. Therefore, an optimum altitude for the aerial base station which results in a maximum coverage exists. In \cite{R10}, assuming only one DSC operating with no inter-cell interference, the optimum altitude for the DSC which leads to a maximum coverage was derived. However, the authors did not consider the case of multiple DSCs where beyond altitude, the distance between DSCs also impacts the overall coverage performance. The problem of multiple DSC deployment becomes even more challenging when  interference occurs between the received signal from different DSCs. The impact of interference on the coverage performance of DSC has not been investigated in prior studies.

The main contribution of this paper is to develop fundamental results on the coverage and optimal deployment of wireless DSCs. First, we analyze the optimal height for a DSC for which the required transmit power for covering a target area is minimized. Next, to achieve the maximum coverage performance for a specified area, the optimal deployment of two DSCs in both interference and interference-free situations is studied. The goal is to maximize the coverage performance of DSCs by calculating optimal values for DSCs altitude and the distance between them. To this end, we consider a target area with a specific size and for a single static DSC and we find the optimum altitude ensuring sufficient coverage using minimum transmit power. Next, with the goal of offering maximum coverage for the target area the optimal deployment of two DSCs over the area is investigated. Numerical evaluations are then used to validate the derived analytical results.\

The rest of this paper is organized as follows: Section II presents the system model describing the air to ground channel model. In Section III, coverage performance of a single DSC and multiple DSCs is investigated. In Section IV, we present the numerical results. Finally, Section V concludes the paper.

\section{System Model and the Single DSC Case}
Consider a static DSC located at an altitude of $h$ meters transmitting signals to static users on the ground. In order to analyze the coverage of such a DSC, it is imperative to adopt an appropriate path loss model that is suitable for air to ground communication. In this section after presenting the air to ground channel model, the optimal altitude for a single DSC case is derived.
\subsection{Air to Ground Channel Model}\label{sec:sysmodel}
As discussed in \cite{R3} and \cite{R9}, the ground receiver receives three groups of signals including LOS, strong reflected signals (NLOS) and multiple reflected components which cause multipath fading. These groups can be considered separately with different probabilities of occurrence. Typical, as discussed in \cite{R7}, it is assumed that the received signal is categorized only in one of the mentioned groups. Each group has a specific probability of occurrence which is a function of environment, density and height of buildings and elevation angle. The probability of receiving LOS and strong NLOS components are significantly higher than fading \cite{R7}. Therefore, the impact of small scale fading can be neglected. A common approach to model air to ground propagation channel is to consider LOS and NLOS components along with their occurrence probabilities separately. Note that for NLOS connections due to the shadowing effect and reflection of signals from obstacles, path loss is higher than LOS. Hence, in addition to the free space propagation loss, different excessive path loss values are assigned to LOS and NLOS links.

\ Figure 1 shows a DSC located at an altitude of $h$ and ground users at the radius of $R$ from a point corresponding to the projection of DSC onto the ground. The distance between the DSC and the ground receiver is $d = \sqrt {{R^2} + {h^2}}$ and $\theta  = {\tan ^{ - 1}}(h/R)$ indicates the elevation angle (in radian) DSC with respect to the user.

The path loss for LOS and NLOS connections are \cite{R3}:\vspace{-0.3cm}

\begin{align}
{L_{{\text{LoS}}}}(dB) = 20\log (\frac{{4\pi {f_c}d}}{c}) + {\xi _{{\text{LoS}}}},\\
{L_{{\text{NLoS}}}}(dB) = 20\log (\frac{{4\pi {f_c}d}}{c}) + {\xi _{{\text{NLoS}}}},
\end{align}
where ${L_{{\text{LoS}}}}$ and ${L_{{\text{NLoS}}}}$ are the average path loss for LOS and NLOS links, ${\xi _{{\text{LoS}}}}$ and ${\xi _{{\text{NLoS}}}}$ are the average additional loss to the free space propagation loss which depend on the environment, $c$ is the speed of light, ${f_{\text{c}}}$ is the carrier frequency and $d$  is the distance between the DSC and ground receiver.
\begin{figure}[!t]
  \begin{center}
   \vspace{-0.2cm}
    \includegraphics[width=2.5cm]{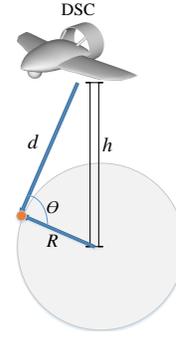}
    \vspace{-0.1cm}
    \caption{\label{fig: Fig1} Low altitude platform.\vspace{-0.5cm}}
  \end{center}\vspace{-0.3cm}
\end{figure}

The probability of having LOS connections at an elevation angle of  $\theta $ is given by \cite{R10}
\begin{equation}\label{LOS}
{\text{P}}(\text{LOS}) = \frac{1}{{1 + \alpha \exp ( - \beta \left[ {\tfrac{{180}}{\pi }\theta  - \alpha } \right])}} ,
\end{equation}
where $\alpha$  and $\beta$  are constant values which depend on the environment (rural, urban, dense urban,  etc.). Also, probability of NLOS is  ${\text{P}}({\text{NLOS}}) = 1 - {\text{P}}({\text{LOS}})$. Equation (\ref{LOS}) indicates that the probability of having LOS connection between the aerial base station and to ground users is an increasing function of elevation angle. In other words, by increasing the elevation angle between the receiver and transmitter, the shadowing effect decreases and clear LOS path exists with high probability. Finally, the average path loss as a function of the DSC altitude and coverage radius becomes
\begin{equation}\label{PL}
\overline L (R,h) = {\text{P}}({\text{LoS}}) \times {L_{\text{LoS}}} + {\text{P}}({\text{NLOS}}) \times {L_{\text{NLoS}}}.
\end{equation}\vspace{-0.5cm}

\subsection{Optimal Altitude for Single DSC }\vspace{-0.1cm}
Given this channel model, our first goal is to study the problem of optimal altitude for a single DSC seeking maximum ground coverage.
Consider a DSC transmitting its signal with the power of ${P_{\text{t}}}$, then the received power is written as
\begin{equation}\label{eq2:solve}
{P_{\text{r}}}(dB) = {P_{\text{t}}} - \overline L (R,h).
\end{equation}

A point on the ground is covered by the DSC if it’s signal to noise ratio (SNR) is greater than a threshold (${\gamma _{{\text{th}}}}$ ). That is
\begin{equation}
\gamma (R,h) = \frac{{{P_{\text{r}}}}}{N} \geqslant {\gamma _{{\text{th}}}},
\end{equation}
where $N$ is the noise power.
Obviously, to find the maximum achievable coverage radius we should have $\gamma (R,h) = {\gamma _{{\text{th}}}}$. For a fixed transmit power, the optimal DSC height which results in maximum coverage is computed by solving the following equation \cite{R7}:
\begin{equation}\label{eq1:solve}
\frac{{180({\xi _{{\text{NLoS}}}} - {\xi _{{\text{LoS}}}})\beta Z}}{{\pi {{(Z + 1)}^2}}} - \frac{{20\mu }}{{\log (10)}} = 0,
\end{equation}
where $Z = \alpha \exp ( - \beta \left[ {\tfrac{{180}}{\pi }{{\tan }^{ - 1}}(\mu ) - \alpha } \right])$ and $\mu  = h/R$.
By solving  (\ref{eq1:solve}), ${\mu _{{\text{opt}}}} = {h_{{\text{opt}}}}/{R_{\max }}$
 is computed and using (\ref{eq2:solve}), ${h_{{\text{opt}}}}$ and ${R_{\max }}$
  are found.

Note that due to the limitation on the altitude of DSCs, we have $h \leqslant {h_{\max }}$, where ${h_{\max }}$ is the maximum allowable altitude for DSCs. It can be shown that using the typical values for the parameters in  (\ref{PL}) and (\ref{LOS}), $\frac{{{\partial ^2}h}}{{{\partial ^2}R}} < 0$ which implies that $R$ as a function of $h$ is a concave function. Therefore,  the coverage range increases as the altitude increases up to the optimal point and after that it decreases. As a result, considering a constraint on the maximum allowable altitude, the feasible optimal altitude is equal to ${\hat h_{{\text{opt}}}} = {\text{min}}\{ {h_{\max }},{h_{{\text{opt}}}}\}$.
Now, assume that the target area which should be covered is fixed with radius of ${R_{\text{c}}}$
 and the goal is to find the an optimal altitude where the minimum transmit power is required to cover the target area. The derivative of transmit power with respect to the altitude is:
\begin{equation}
\partial {P_{\text{t}}}/\partial h = \partial \overline L ({R_{\text{c}}},h)/\partial h = 0 \to {h_{{\text{opt}}}} = {\mu _{{\text{opt}}}}{R_{\text{c}}}.
\end{equation}

Finally, considering the feasible optimal altitude, the minimum required transmit power will be
\begin{equation}
{P_{{\text{t,min}}}}(dB) = \overline L ({R_{\text{c}}},{\hat h_{{\text{opt}}}}) + {\gamma _{{\text{th}}}}N.
\end{equation}\vspace{-0.45cm}

Now, we prove that $R$ as a function of $h$ does not have more than one local maximum. In other words, if a local maximum exists, the corresponding $h$ is the optimal altitude. Clearly, if a DSC is deployed at the optimal altitude, it provides a maximum SNR for any ground users. This is equivalent to have a minimum path loss for the users. Consider a ground user located at the radius of ${R_o}$ from a point corresponding to the projection of a DSC onto to the ground. The average path loss at the user location as a function of elevation angle can be written as:

\begin{equation}
\begin{aligned}
\overline L (\theta ) = \frac{{({\xi _{{\text{LoS}}}} - {\xi _{{\text{NLoS}}}})}}{{1 + \alpha \exp ( - \beta \left[ {\tfrac{{180}}{\pi }\theta  - \alpha } \right])}} \\- 20\log ({R_o}cos(\theta )) + 20\log (\frac{{4\pi {f_c}d}}{c}).
\end{aligned}
\end{equation}

Since altitude and elevation angle are directly related, the optimal altitude corresponds to the optimal elevation angle. To show that the number of local minimum path loss as a function of elevation angle is not greater than one, we should have:
\begin{proposition}
If a local minima exists in the path loss function, then it is the only local minima of the function.
\end{proposition}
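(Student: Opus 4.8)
The plan is to reduce $\overline L$ to a function of the single variable $\theta$ and then control the sign of $\overline L'$ by splitting the elevation range at the angle where $\mathrm P(\text{LoS})=\tfrac12$. First, since $R_o$ is fixed, $\theta=\tan^{-1}(h/R_o)$ and $d=R_o/\cos\theta$, so the displayed expression collapses to
\begin{equation}
\overline L(\theta)= -\frac{D}{1+Z(\theta)}-20\log\cos\theta+\text{const},\qquad D:=\xi_{\text{NLoS}}-\xi_{\text{LoS}}>0,\ \ Z(\theta):=\alpha\exp\!\big(-\beta\big[\tfrac{180}{\pi}\theta-\alpha\big]\big).
\end{equation}
Here $Z$ is strictly positive and strictly decreasing, so $-D/(1+Z)$ is strictly decreasing (higher elevation, less excess loss) while the free-space term $-20\log\cos\theta$ is strictly increasing and strictly convex on $(0,\pi/2)$. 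Differentiating yields $\overline L'(\theta)=\tfrac{20}{\ln10}\tan\theta-\tfrac{(180/\pi)D\beta Z}{(1+Z)^2}$, i.e.\ exactly the critical-point relation used earlier; since $\overline L'(0^+)<0$ and $\overline L'(\theta)\to+\infty$ as $\theta\to(\pi/2)^-$, a local minimum does exist, and it remains only to prove uniqueness.

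Next, I would show that $\overline L$ is strictly convex on the LoS-dominated regime. Let $\theta^\star$ be defined by $Z(\theta^\star)=1$ and put $\mathcal H:=[\theta^\star,\pi/2)=\{\theta:Z(\theta)\le1\}$. A direct computation gives $\overline L''(\theta)=\tfrac{20}{\ln10}\sec^2\theta+(\tfrac{180}{\pi}\beta)^2 D\,\tfrac{Z(1-Z)}{(1+Z)^3}$, both terms of which are nonnegative once $Z\le1$. Hence $\overline L'$ is strictly increasing on $\mathcal H$, so $\overline L$ has at most one critical point in $\mathcal H$, and that point — if present — is the unique local minimum of $\overline L$ restricted to $\mathcal H$.

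It then remains to rule out a local minimum on the complementary NLoS-dominated regime $\mathcal L:=(0,\theta^\star)=\{Z>1\}$, for which it suffices to prove $\overline L'<0$ there, equivalently $\psi(\theta):=\tfrac{(1+Z)^2\tan\theta}{Z}<K$ with $K:=\tfrac{\ln10}{20}\cdot\tfrac{180}{\pi}D\beta$. On $\mathcal L$ one has $\tan\theta<\tan\theta^\star$, and although $(1+Z)^2/Z$ grows with $Z$, its large values occur precisely at small $\theta$ where $\tan\theta\approx0$; substituting the standard environment constants $(\alpha,\beta,\xi_{\text{LoS}},\xi_{\text{NLoS}})$ — the same ones underlying the concavity assertion above — one checks that $\sup_{\mathcal L}\psi$ is of order a few units whereas $K$ is of order tens, so indeed $\overline L'<0$ on $\mathcal L$. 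Combining the last two steps, $\overline L$ is strictly decreasing on $\mathcal L$ and decreasing-then-increasing on $\mathcal H$, hence unimodal with a single local (and global) minimum, which via $h=R_o\tan\theta$ transfers to a unique optimal altitude.

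I expect this last step to be the real obstacle: unlike on $\mathcal H$, the path loss is genuinely non-convex on $\mathcal L$ (the sigmoid term is concave where $Z>1$), so no clean second-derivative or monotonicity argument is available and one must lean on the quantitative magnitude of the environment parameters — which is why, as with the earlier concavity statement, the claim is asserted for the typical parameter values rather than unconditionally. A minor, routine point is to record that for those parameters $\mathcal H$ is nonempty, i.e.\ $\mathrm P(\text{LoS})$ does exceed $\tfrac12$ at high elevation ($\alpha e^{\beta\alpha}>1>\alpha e^{-\beta(90-\alpha)}$), so that the split is meaningful.
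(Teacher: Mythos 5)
Your proof is correct, but it follows a genuinely different route from the paper's. The paper proves a single pointwise implication, $\partial\overline L/\partial\theta>0\;\Rightarrow\;\partial^2\overline L/\partial\theta^2>0$: starting from $\tan\theta>\frac{(180/\pi)\beta(\xi_{\text{NLoS}}-\xi_{\text{LoS}})Z}{(1+Z)^2}$ it squares, invokes the parameter condition $\xi_{\text{NLoS}}-\xi_{\text{LoS}}>Z$ (stated to hold for typical urban values and elevation angles above about $5^\circ$), and concludes that once $\overline L$ starts increasing it can never turn back, so a local minimum, if one exists, is unique. You instead split the elevation range at $\theta^\star$ where $Z=1$ (i.e.\ $\mathrm P(\text{LoS})=\tfrac12$): on $\{Z\le 1\}$ both terms of $\overline L''$ are nonnegative, giving unconditional strict convexity and hence at most one critical point there, while on $\{Z>1\}$ you show $\overline L'<0$ via the quantitative bound $\sup\psi<K$; for the Table~I urban parameters this indeed holds comfortably ($\sup\psi\approx 3$--$4$ versus $K\approx 35$), so the low-angle regime contains no critical point at all. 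Both arguments lean on typical parameter values at exactly one step, but in different places: the paper needs $\xi_{\text{NLoS}}-\xi_{\text{LoS}}>Z$ precisely on the set where $\overline L'>0$ and tacitly excludes very low elevation angles where that inequality fails, whereas your split confines the numerical input to the NLoS-dominated low-angle region and makes the high-angle convexity argument environment-independent; you additionally obtain existence of the minimum ($\overline L'(0^+)<0$ and $\overline L'\to+\infty$ as $\theta\to(\pi/2)^-$) and global unimodality, not just uniqueness of a local minimum. A further minor benefit is that your derivatives carry the correct signs and the $20/\ln 10$ factor, which the paper's displayed expressions drop.
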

\begin{proof}
\begin{gather*}
\begin{aligned}
&{\text{we have to show if}} {\text{       }}\frac{{\partial \overline L (\theta )}}{{\partial \theta }} > 0{\text{ }} \to \frac{{{\partial ^2}\overline L (\theta )}}{{\partial {\theta ^2}}} > 0.\\
&  \frac{{\partial \overline L (\theta )}}{{\partial \theta }} = \frac{{\tfrac{{180}}{\pi }\beta ({\xi _{{\text{NLoS}}}} - {\xi _{{\text{LoS}}}})Z}}{{{{(1 + Z)}^2}}} + \tan (\theta ) \hfill \\
&  \frac{{\partial \overline L (\theta )}}{{\partial \theta }} > 0 \to \tan (\theta ) > \frac{{\tfrac{{180}}{\pi }\beta ({\xi _{{\text{NLoS}}}} - {\xi _{{\text{LoS}}}})Z}}{{{{(1 + Z)}^2}}}, \hfill \\
\end{aligned}
\end{gather*}
\raggedright \text{So,}
\begin{gather*}
\begin{aligned}
 & {\left[ {\tan (\theta )} \right]^2} > \frac{{{{\left[ {\tfrac{{180}}{\pi }\beta ({\xi _{{\text{LoS}}}} - {\xi _{{\text{NLoS}}}})Z} \right]}^2}}}{{{{(1 + Z)}^4}}} \hfill \\
 &  = \frac{{{{\left[ {\tfrac{{180}}{\pi }\beta } \right]}^2}({\xi _{{\text{LoS}}}} - {\xi _{{\text{NLoS}}}}){Z^3}}}{{{{(1 + Z)}^4}}} \times \frac{{({\xi _{{\text{LoS}}}} - {\xi _{{\text{NLoS}}}})}}{Z} \hfill \\
&  \mathop  > \limits^{(a)} \frac{{{{\left[ {\tfrac{{180}}{\pi }\beta } \right]}^2}({\xi _{{\text{LoS}}}} - {\xi _{{\text{NLoS}}}}){Z^3}}}{{{{(1 + Z)}^4}}}, \hfill \\
\end{aligned}
\end{gather*}
\raggedright \text{Finally},
\begin{small}
\begin{gather*}
 \frac{{{\partial ^2}\overline L (\theta )}}{{\partial {\theta ^2}}} = \frac{{ - {{\left[ {\tfrac{{180}}{\pi }\beta } \right]}^2}({\xi _{{\text{LoS}}}} - {\xi _{{\text{NLoS}}}})({Z^3} - Z)}}{{{{(1 + Z)}^4}}}
+ {\tan ^2}(\theta ) + 1 > 0,
\end{gather*}
\end{small}
where $Z = \alpha \exp ( - \beta \left[{\theta - \alpha } \right])$ and (a) is based on ${(\xi _{{\text{LoS}}}} - {\xi _{{\text{NLoS}}}}) > Z$ which is hold for the typical values related to urban environments and elevation angles greater than 5 degree.

Now, assume $\theta  = {\theta _o}$ is a local minimum, then,
\begin{gather*}
{\left. {\frac{{\partial \overline L (\theta )}}{{\partial \theta }}} \right|_{\theta  = \theta _o^ + }} > 0 \to {\text{for }}\theta  > {\theta _o}{\text{ we have }}\frac{{{\partial ^2}\overline L (\theta )}}{{\partial {\theta ^2}}} > 0.
\end{gather*}
Hence,  $\theta  = {\theta _o}$ is the only local minimum and is the optimal elevation angle.
\end{proof}
Knowing that the path loss as a function of altitude has only one local minima, the optimal altitude can be found by increasing the DSC altitude up to a point where the path loss starts increasing.
\section{Case of Two Non-Interfering DSCs }\label{sec:Deployment}\vspace{0.1cm}
 Here, assuming that two DSCs are operating together in a given area, the optimal distance between them in both interference and interference-free situations is analyzed.

\subsection{Two DSCs in interference free situations}\label{sec:da}\vspace{-0.1cm}

Next, we consider two DSCs that are used to provide coverage for a target area. Here, without loss of generality, we consider the target area to be a rectangle whose length is given by $a$ and whose width is given by $b$. To have  maximum coverage for this target area, optimal values for DSCs altitude and distance should be determined.
Intuitively, for a given target area in the absence of interference between the two DSCs, the maximum overall coverage is obtained if the effective coverage inside the target area provided by each DSC is maximized while the overlap between the coverages of DSCs is minimum. These conditions are satisfied if each DSC is at its optimal altitude and they are  separated as far as possible but they should not cover outside the target area. In general, the DSCs can be deployed in different altitude and they might use a different transmit power. As a result, they can provide a different coverage radius.

Figure \ref{fig: Fig2} shows the coverage of two DSCs located at their optimal altitudes, $D$ is the distance between DSCs, $R_1^{\max }$ and $R_2^{\max }$ corresponds to the maximum coverage radius for the first and second DSC, and $O (x,y)$ is the origin of coverage area with respect to the center of target area. The optimal deployment of two DSCs in the absence of interference can be determined by the following set of equations:

\begin{figure}[!t]
  \begin{center}
   \vspace{-0.2cm}
    \includegraphics[width=6cm]{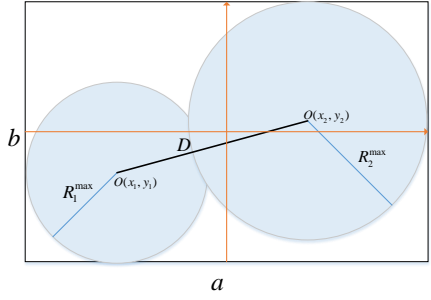}
    \vspace{-0.1cm}
    \caption{\label{fig: Fig2}  Optimal deployment of two DSCs in the absence of interference.\vspace{-1cm}}
  \end{center}
\end{figure}

\begin{equation} \label {no-inter}
\begin{cases}
  {h_1} = \hat h_1^{{\text{opt}}}, \hfill \\
  {h_2} = \hat h_2^{{\text{opt}}}, \hfill \\
  ({x_1},{y_1}) = (\frac{{ - a}}{2} + R_1^{\max },\frac{{ - b}}{2} + R_1^{\max }), \hfill \\
  ({x_2},{y_2}) = (\frac{a}{2} - R_2^{\max },\frac{b}{2} - R_2^{\max }), \hfill \\
\end{cases}
\end{equation}
where  $\hat h_1^{{\text{opt}}}$ and $\hat h_2^{{\text{opt}}}$ are the optimal feasible altitude for ${{\text{DSC}}_{\text{1}}}$ and ${{\text{DSC}}_{\text{2}}}$. (\ref{no-inter}) is found by placing the coverage areas as separate as possible and the tangent to the borders of target area. Note that in this case the target area is larger than the coverage region of UAVs and as a result the coverage regions will be located inside the target area.
By using some geometric properties for calculating the total area of intersecting circles, the maximum overall coverage area can be expressed as follows:

If the two coverage areas overlap $(D \leqslant R_1^{{\text{max}}} + R_2^{{\text{max}}})$,

 \begin{align}
 A_{\text{C}}^{{\text{max}}}  &= \pi \big[{(R_1^{{\text{max}}})^2} + {(R_2^{{\text{max}}})^2}] \nonumber \\
& -{(R_1^{{\text{max}}})^2}{\cos ^{ - 1}}\big[\frac{{{D^2} + {{(R_1^{{\text{max}}})}^2} - {{(R_2^{{\text{max}}})}^2}}}{{2DR_1^{{\text{max}}}}}]   \nonumber \\
 &  - {(R_2^{{\text{max}}})^2}{\cos ^{ - 1}}\big[\frac{{{D^2} + {{(R_2^{{\text{max}}})}^2} - {{(R_1^{{\text{max}}})}^2}}}{{2DR_2^{{\text{max}}}}}] + B ,
 \end{align}
where \vspace{-0.5cm}\\
\begin{small}
\begin{align*}
& B = \sqrt {( - D + R_1^{\max } + R_2^{\max })(D - R_1^{\max } + R_2^{\max })}\vspace{0.2cm}\\
&\times \sqrt{(D + R_1^{\max } - R_2^{\max })(D + R_1^{\max } + R_2^{\max })}.
\end{align*}
\end{small}
For the special case where the two DSCs are identical, located at the same altitude and use the same transmit power, they have the same coverage radius ($R_1^{{\text{max}}} = R_2^{{\text{max}}} = {R^{{\text{max}}}}$). Then (13) is reduced to
\begin{equation}
\begin{split}
  A_C^{{\text{max}}} &= 2\pi {({R^{{\text{max}}}})^2} - 2{({R^{{\text{max}}}})^2}{\cos ^{ - 1}}\left( \frac{D}{{2{R^{{\text{max}}}}}}\right)\\
   &+ \frac{D}{2}\sqrt {4{{({R^{{\text{max}}}})}^2} - {D^2}} , \hfill \\
\end{split}
\end{equation}


If  $D > R_1^{{\text{max}}} + R_2^{{\text{max}}}$, they do not overlap and the total coverage area is given by
\begin{equation}
A_{\text{C}}^{{\text{max}}} = \pi \big[{(R_{\text{1}}^{{\text{max}}})^2} + {(R_{\text{2}}^{{\text{max}}})^2}].
\end{equation}\vspace{-0.3cm}

\subsection{Case of Two Interfering DSCs}\label{sec:inter}\vspace{0.1cm}
Next, we consider a case in which the two DSCs  interfere with each other during the transmission. This situation happens when DSCs are not controlled by the same control system so they might use the same transmit channel. Also, due to the limited number of available channels in a wireless network, the DSCs might transmit over the same channel resulting in interference.

Consider a given target area which should be covered by two DSCs. Clearly, the distance should not be too large to avoid covering unwanted area (outside the target area), and it should not be too small due to the high interference effect. Therefore, an optimum distance between DSCs which results in the highest coverage exists. Figure 3 illustrates two DSCs separated by $D$. Consider a ground user at the radius of ${R_1}$ and ${R_1}$  from the projection of ${{\text{DSC}}_{\text{1}}}$ and ${{\text{DSC}}_{\text{2}}}$ onto the ground. $\phi$ is the angle between ${\vec R_1}$ and $\vec D$. In this case, a point on the ground is covered by a DSC if the signal to interference plus noise ratio (SINR) be greater than ${\gamma _{{\text{th}}}}$. Thus \vspace{-0.5 cm}

\begin{figure}[!t]
  \begin{center}
   \vspace{-0.2cm}
    \includegraphics[width=5cm]{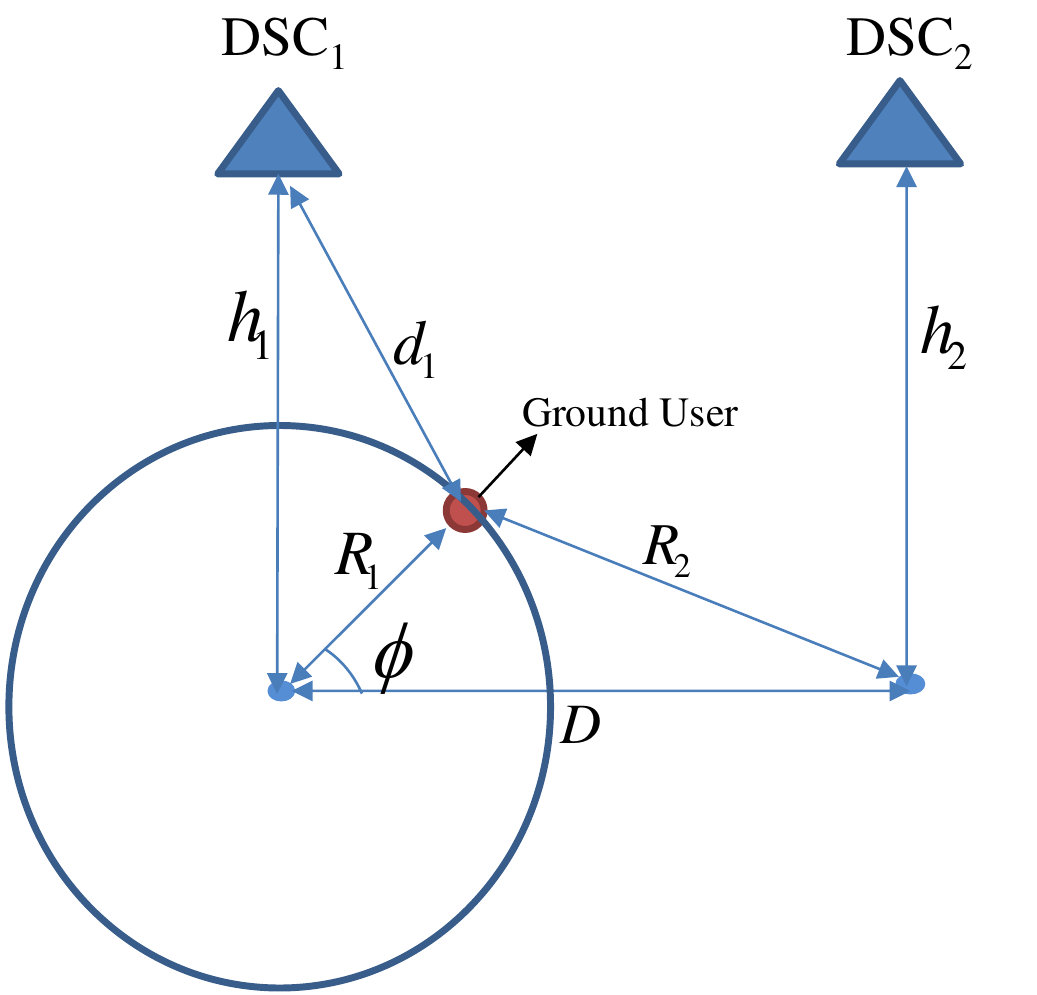}
    \vspace{-0.1cm}
    \caption{\label{fig: 2DSCs}  Two DSCs interfering scenario. \vspace{-0.5cm}}
  \end{center}\vspace{-0.2cm}
\end{figure}

\begin{equation}
\gamma ({R_1},{R_2},{h_1},{h_2}) = \frac{{{P_{{\text{r,1}}}}}}{{N + {P_{{\text{r,2}}}}}} \geqslant {\gamma _{{\text{th}}}},
\end{equation}
where ${P_{{\text{r,1}}}}$ and  ${P_{{\text{r,2}}}}$   are the received power from the first and second DSCs.

Given that ${R_2}^2 = {R_1}^2 + {D^2} - 2{R_1}D\cos (\phi )$, and assuming that the DSCs have the same altitude of $h$, the SINR can be rewritten as\vspace{-0.3cm}

\begin{equation}
\gamma ({R_1},D,\phi ) = \frac{{{P_{r,1}}}}{{{P_{r,2}} + N}} \geqslant {\gamma _{{\text{th}}}}.
\end{equation}

Obviously, for given $D$ and ${R_1}$ values, a specific range for $\phi$ which satisfies the above inequality, is obtained. That is, for $D = {D_o}$
  and $R = {R_o}$, the positive coverage angle range is ${\phi _{\max }} \leqslant \pi$. Note that since the target area is limited, the effective coverage angle is inside the target area. Hence, the upper bound of  $\phi$ is not necessarily $\pi $ and is replaced by ${\phi _{\max }}$. Assume that the maximum coverage for a DSC in the absence of interference is ${R_{\text{m}}}$. Without loss of generality, we fix the location for ${{\text{DSC}}_{\text{2}}}$ at ${x_2} = \frac{a}{2} - {R_{\text{m}}}$. In addition, for simplicity, we assume that DSCs altitude and their transmit power are fixed and identical and the only parameter that can change is the distance between DSCs. The goal is to find the optimal distance between two DSCs which leads to the maximum overall coverage inside the target area. Note that we fix the position of one DSC over the target area and then deploy the other DSC within distance $D$ from the first one. The total coverage area is expressed as:

\begin{equation}
 \begin{aligned}
{A_C} = {A_{{\text{C}},1}} + {A_{{\text{C}},2}} = 2.\int\limits_{R = 0}^{{R_m}} {{\text{  }}\int\limits_{\phi  = {\phi _{min}}(R)}^{\phi  = {\phi _{max}}(R)} {R.dRd\phi } }  \\
 +2.\int\limits_{R = 0}^{{R_m}} {{\text{  }}\int\limits_{\phi  = {\phi _{min}}(R)}^{\phi  = \pi } {R.dRd\phi } },
 \end{aligned}
\end{equation}
where ${A_{{\text{C}},1}}$ and ${A_{{\text{C}},2}}$  are the effective coverage inside the target area provided by ${{\text{DSC}}_{\text{1}}}$ and ${{\text{DSC}}_{\text{2}}}$. It can be shown that  for ${\phi _{\max }}$ ${{\text{DSC}}_{\text{1}}}$ that might partially cover outside the target area is computed as follows:\vspace{-0.2cm}

\begin{equation}
{\phi _{{\text{max}}}}(R) = {\cos ^{ - 1}}(\max\{  - 1,\frac{{D + {R_{\text{m}}} - a}}{R}\}).
\end{equation}

Finally, the optimal distance between DSCs is \vspace{-0.1cm}

\begin{equation}
{D_{{\text{opt}}}} = \mathop {\arg \max }\limits_D \{{A_C}(D)\}.
\end{equation}

Note that although most of the analytical results shown in the previous sections have closed form expressions, in the case of two fully interfering DSCs, due to the dependency of SINR on the location of ground user, a closed form expression for the total coverage area cannot be derived. In a more general case, the DSCs can be placed at different heights and consequently they can have different coverage performance (${A_{{\text{C}},1}} \ne {A_{{\text{C}},2}}$). The total covered area can be written as: \vspace{-0.3cm}

\begin{equation}
\begin{aligned}
{A_{\text{C}}} = 2.\int\limits_{R = 0}^{{R_{m,1}}} {{\text{  }}\int\limits_{\phi  = {\phi _{min,1}}(R)}^{\phi  = {\phi _{max,1}}(R)} {R.dRd\phi } } \\
 + 2.\int\limits_{R = 0}^{{R_{m,2}}} {{\text{  }}\int\limits_{\phi  = {\phi _{min,2}}(R)}^{\phi  = \pi } {R.dRd\phi } },
\end{aligned}
\end{equation}
where  ${R_{m,1}}$ and ${R_{m,2}}$ is the maximum coverage for the first and second DSCs in the absence of interference,
${\phi _{\min ,1}(R)}$ and ${\phi _{\min ,2}(R)}$ are the minimum angle that for a given $R$ can be covered by DSCs.

In this case, beyond the optimal DSCs distance, the optimal altitudes should also be determined. To this end, a three dimensional search over $D$, $h_1$ and $h_2$ is required. Then we should have \vspace{-0.4cm}

\begin{equation}
({D_{{\text{opt}}}},{h_{{\text{1,opt}}}},{h_{{\text{2,opt}}}}) = \mathop {\arg \max }\limits_{D,{h_1},{h_2}} \{ {A_C}(D,{h_1},{h_2})\}.
\end{equation}

\section{Numerical Results}\label{sec:Results}\vspace{0.1cm}

Assuming that DSCs are operating in urban environments, numerical and analytical results are presented. Table I lists the typical parameters used in the numerical analysis \cite{R3}. Note that the values of $\alpha$ and $\beta$ in (\ref{LOS}) depend on the environment and are different when DSCs operate in other areas such as dense urban or suburban. Here, we consider an urban area and use the corresponding $\alpha$ and $\beta$ parameters to compute the path loss effect.\vspace{-0.2cm}\\

\ Figure 4 shows the minimum transmit power required to have a certain coverage radius as a function of DSC altitude. Deploying a DSC at the optimal altitude minimizes the minimum required transmit power for covering a target area. In fact, for very low altitudes, due to the shadowing impact, the probability of LOS connections between transmitter and receiver decreases and consequently the coverage radius decreases. On the other hand, in very high altitude LOS links exist with a high probability. However, due to the large distance between transmitter and receiver, the path loss increases and consequently the coverage performance decreases. For instance, the optimal altitude for providing 500 m coverage radius while consuming minimum transmit power is 310 m. Moreover, in Figure 4, we can see that only one local minimum exists for the transmit power as a function of altitude. The results in Figure 4 provide very useful guidelines for power minimization which is one of the main concerns in designing DSC networks. Figure 4 shows that as the radius of target area increases, both the optimal altitude and the minimum transmit power required to cover the area increase.

\begin{figure}[!h]
  \begin{center}
   \vspace{-0.02cm}
    \includegraphics[width=8cm]{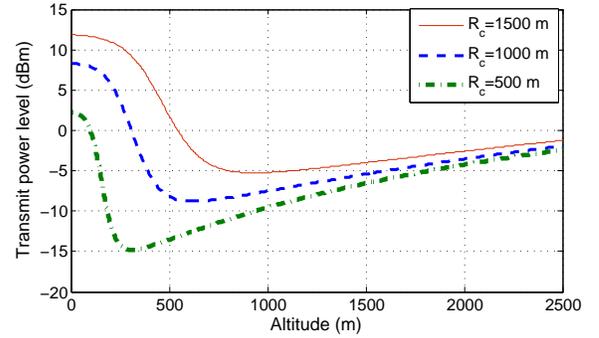}
    \vspace{-0.14cm}
    \caption{\label{fig. H-vs-Pt}  Minimum required transmit power \vspace{0.1cm}}
  \end{center}\vspace{-0.13cm}
\end{figure}

\ In Figure 5, we show the impact of interference on the coverage performance when two DSC are located at an altitude of $300{\text{ m}}$ and a separation distance of $1100$ m. The target area is a rectangle with $a = 2000$ m, $b = 700$ m. The overall coverage area includes two parts inside and outside of the target area.  Note that the effective coverage area is the part of coverage region inside the target area. Figure 5 also shows the impact of interference between DSCs that creates holes between the coverage regions provided by the two DSCs. To maximize the effective coverage area, the distance between two DSCs should be properly adjusted such that the interference between DSCs is not high while the coverage region outside the target area is minimized.
\begin{table}[t]
\captionof{table}{Parameters in numerical analysis}
\begin{center}
\begin{tabular}[b]{|c|c|} 
 \hline
 \textbf{Parameters} &  \textbf{ Value}\\
 \hline
${f_{\text{c}}}$  & 2 GHz  \\
\hline
${\xi _{{\text{LoS}}}}$  & 1 dB  \\
\hline
${\xi _{{\text{LoS}}}}$  & 20 dB  \\
\hline
$N{\text{(200 KHz bandwith)}}$ & -120 dBm  \\
\hline
$\alpha$ & 9.6  \\
\hline
$\beta$ & 0.28  \\
\hline
length of area ($a$)  & 2000 m  \\
\hline
${\gamma _{{\text{th}}}}$  & 10 dB  \\
\hline
\end{tabular}\vspace{-0.8cm}
\end{center}
\end{table}

\ Figure 6 shows the ratio of  effective coverage area  to the target area that can be achieved using two DSCs for different values of $D$. In the presence of interference for high values of $D$, although the DSCs septation is sufficient to mitigate the impact of interference, they mainly provide coverage for outside of the target area which is not desirable. On the other hand, if the DSCs are very close together, interference between them will significantly reduce the overall coverage performance. As shown in Figure 6, an optimal separation distance between the two DSCs resulting in a maximum coverage in both interference and non-interference cases exists and is about 1100 m and 900 m respectively. In the non-interference situation, as expected, the overall coverage is higher and the optimal separation distance is lower compared to that of in the interference case. The reason is that when there is no interference, we can reduce the DSCs separation distance without loosing the coverage performance that can occur in the presence of interference.

\ In Figure 7, we show the optimal DSCs separation distance as a function of length of the target area. According to Figure 7, the optimal distance between DSCs almost linearly increases according to the size of the area. For example, when the length of the target area changes from 1800 m to 2400 m, the optimal distance between DSCs increases from 1000 m to 1350 m. In fact, to avoid interference between DSCs we should deploy them as separate as possible but still inside the target area. This can be interpreted as scaling the distance between DSCs along with the target area.

\begin{figure}[!t]
  \begin{center}
   \vspace{-0.2cm}
    \includegraphics[width=8cm]{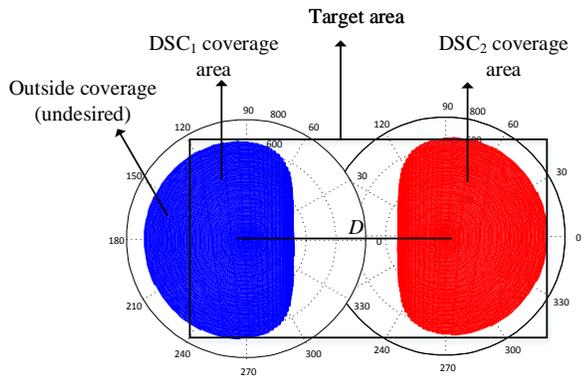}
    \vspace{-0.2cm}
    \caption{\label{fig.Inte0rference}  Coverage performance of two DSCs in the presence of interference.\vspace{-0.02cm}}
  \end{center}\vspace{-0.2cm}
\end{figure}

\begin{figure}[!t]
  \begin{center}
   \vspace{-0.02cm}
    \includegraphics[width=7cm]{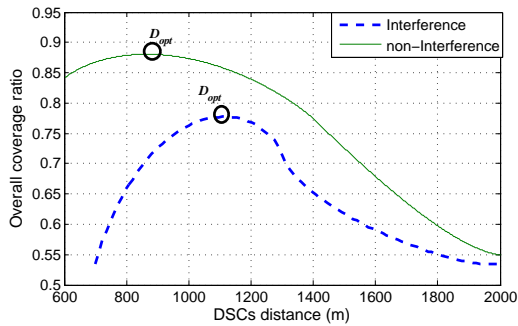}
    \vspace{-0.1cm}
    \caption{\label{fig. D_opt_h=300} Overall coverage ratio versus DSCs separation distance.\vspace{-0.3cm}}
  \end{center}\vspace{-0.5cm}
\end{figure}

\begin{figure}[!t]
  \begin{center}
   \vspace{-0.01cm}
    \includegraphics[width=8.5cm]{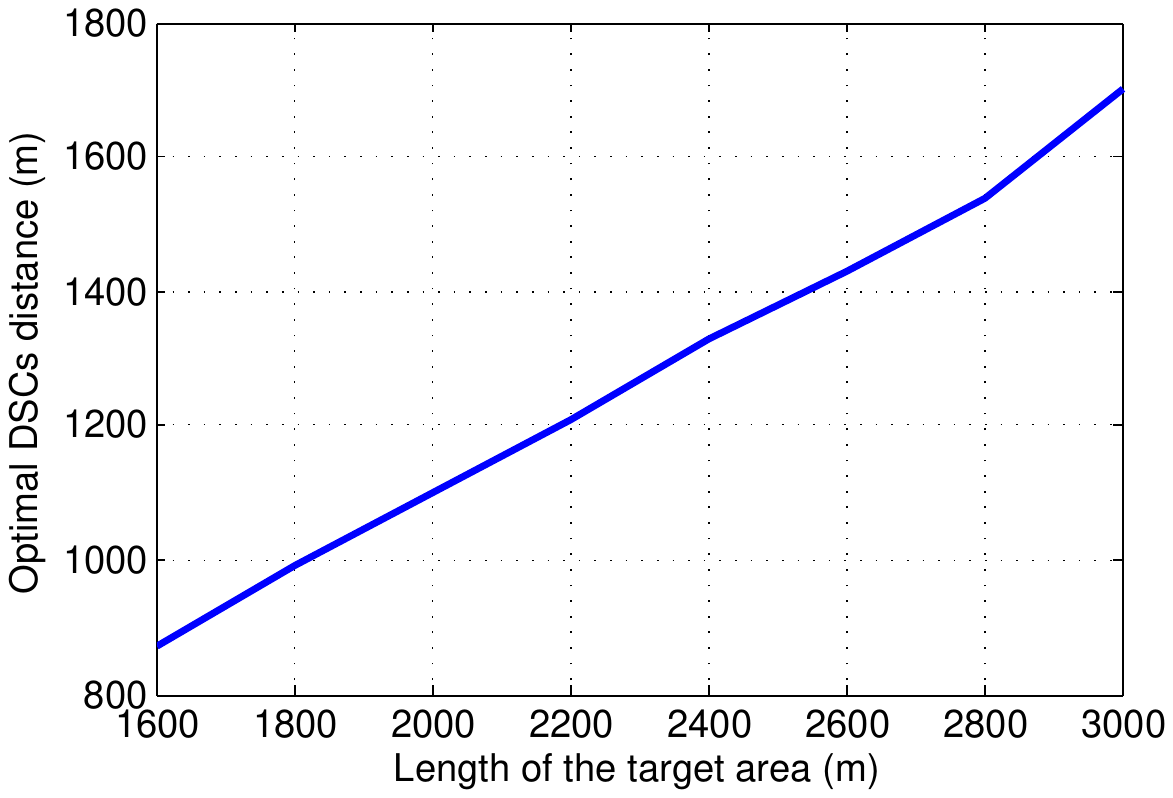}
    \vspace{-0.13cm}
    \caption{\label{fig. D_opt_h=300} Optimal DSCs distance versus length of target area.\vspace{-0.5cm}}
  \end{center}\vspace{-0.4cm}
\end{figure}

\section{Conclusions}\label{sec:Results}\vspace{-0.15cm}

\ In this paper, we have studied the coverage performance of DSCs acting as base stations in low altitude platform. First, the impact of a DSC altitude on the downlink ground coverage has been evaluated and the optimal values for altitude which lead to maximum coverage and minimum required transmit power have been determined. Next, considering an interference free situation and given a target area to be covered, the optimal deployment for two DSCs in terms of altitude and distance between them has been presented. In the presence of full interference between the two DSCs, the coverage area has been formulated. The results have shown the existence of an optimal DSCs separation distance which provides maximum coverage for a given target area. The results presented in the paper provide a stepping stone addressing the more general cases with higher number of DSCs.\vspace{-0.35cm}


\def\baselinestretch{1.2}
\bibliographystyle{IEEEtran}
\bibliography{references}

\end{document}